\newtheorem{theorem}{Theorem}[section]\newtheorem{lemma}[theorem]{Lemma}\newtheorem{corollary}[theorem]{Corollary}
\newcommand{\e}{\emph}\renewcommand{\cal}[1]{\ensuremath{\mathcal{#1}}\xspace}
\renewcommand{\l}{\ensuremath{\lambda}\xspace}
\renewcommand{\sp}{\ensuremath{\mathsf{SP}}\xspace}\newcommand{\m}{\ensuremath{\mathsf{m}}\xspace}
\newcommand{\st}[1]{\ensuremath{\mathsf{St}({#1})}\xspace}
\newcommand{\opt}{\textsf{opt}\xspace}\newcommand{\apx}{\textsf{apx}\xspace}
\title{Improved approximations for robust mincut and shortest path}
\author{Valentin Polishchuk \qquad Mikko Sysikaski \\{\small{Helsinki Institute for Information Technology}} \\ {{\small\texttt{firstname.lastname@cs.helsinki.fi}}}}
\date{}
\begin{document}          %\thispagestyle{empty}\pagestyle{empty}
\maketitle\maketitle\begin{abstract}In two-stage robust optimization the solution to a problem is built in two stages: In the first stage a partial, not necessarily feasible, solution is exhibited. Then the adversary chooses the ``worst'' scenario from a predefined set of scenarios. In the second stage, the first-stage solution is extended to become feasible for the chosen scenario. The costs at the second stage are larger than at the first one, and the objective is to minimize the total cost paid in the two stages.

We give a 2-approximation algorithm for the robust mincut problem and a $(\gamma+2)$-approximation for the robust shortest path problem, where $\gamma$ is the approximation ratio for the Steiner tree. This improves the factors $1+\sqrt2$ and $2(\gamma+2)$ from {{[Golovin, Goyal and Ravi. Pay today for a rainy day: Improved approximation algorithms for demand-robust min-cut and shortest path problems. \e{STACS 2006}]}}. In addition, our solution for robust shortest path is simpler and more efficient than the earlier ones; this is achieved by a more direct algorithm and analysis, not using some of the standard demand-robust optimization techniques.
\end{abstract}
{\bf Keywords: } Approximation algorithms, Demand-robust optimization
\section{Introduction}
The general setting in a two-stage optimization problem is as follows: There is a set of \e{demands} (aka \e{scenarios}), one of which has to be satisfied tomorrow. It is not until tomorrow that it is revealed which demand must be satisfied. A demand is satisfied by buying a set of \e{resources}. Thus, one possibility to satisfy the tomorrow's demand is to wait until tomorrow, know the scenario, and buy a corresponding set of resources. However, the resources are cheaper today than tomorrow, by an ``inflation'' factor $\l>1$. Hence it makes sense to buy some resources already today, i.e., at the \e{first stage}, even without knowing the tomorrow's scenario. (Say, if $\l=\infty$, the resources bought today should better satisfy all demands.) Then tomorrow, upon revealing the requested demand, only some additional, \e{second-stage}, resources have to be bought at the higher price.

The tomorrow's demand is chosen by an adversary. The adversary knows the resources bought at the first stage. He also knows the algorithm that you will use for buying second-stage resources. The adversary chooses the scenario so that your second-stage cost is as large possible (the adversary is omnipotent, and can solve an NP-hard problem for that, if necessary). Your objective is thus to minimize the \e{maximum, worst-case} cost paid in the two stages. Because of such hedging against the worst demand, this type of two-stage robust optimization is called \e{demand-robust}.

\subsection*{Related work}In \e{stochastic optimization} (see, e.g., \cite{stoch1,stoch2}) the objective is to minimize the \e{expected} cost paid over the two stages. \e{Universal approximations} \cite{universal1,universal2} may in a sense be viewed as \e{one-stage} robust solutions. The \e{demand-robust optimization} as studied in this paper was introduced by Dhamdhere et al.\ in~\cite{focs05}. Several techniques have proved to be viable in the field:

\noindent\e{Minimal feasible solutions. } Dhamdhere et al.~\cite{focs05} showed that there always exists an approximate first-stage solution which is a minimal feasible solution for a subset of scenarios. Restricting oneself to such solutions makes one loose at most a factor of 2 in comparison with the unrestricted case. Since the pioneering paper \cite{focs05}, the minimal-solution idea has been extensively used in the design of approximation algorithms for two-stage robust optimization problems.

\noindent\e{LP rounding. } IP formulations of optimization problems often extend directly to stochastic and demand-robust versions; rounding the LP relaxation solution is one way to obtain an approximation.

\noindent\e{Thresholded $\alpha$-approximations. } A common approach to demand-robust optimization is as follows: Suppose you are shooting for an $\alpha$-approximation. Guess the second-stage cost $C_{II}^*$ of the optimal solution (often the number of relevant $C_{II}^*$s is small; if worse comes to worst, go through "all possible" $C_{II}^*$s approximately with repeated doubling -- or more precisely, with repeated $(1+\varepsilon)$-ing). In the first stage, satisfy all high-cost demands -- those each of which is more expensive than $\alpha C_{II}^*$ to satisfy. Then in the second stage you are guaranteed to pay at most $\alpha C_{II}^*$ -- which is within factor $\alpha$ of optimal second-stage cost. Finally, argue that your first-stage solution is also within $\alpha$ times the first-stage cost of the optimum -- for the overall approximation guarantee of $\alpha$.
A very general treatment of the thresholded covering algorithms is presented in a recent paper~\cite{icalp10}.% the thresholded algorithms were applied also earlier -- virtually in all papers on demand-robust optimization. Strictly speaking, if we ignore the mincut on trees in \cite{focs05}, \cite{stacs06} seem to be the first to use thresholding.
%
%A common approach to demand-robust optimization includes scrolling through \e{all} possible values of the second-stage cost incurred by the \e{optimal} solution; with an explicit list of scenarios, it is often possible to also explicitly list all such values. For the case of exponential scenarios, the scrolling is done approximately -- an efficient and simple general framework including such scrolling is presented by Gupta, Nagarajan and Ravi in~\cite{icalp10}.
%stacs06 runs in SP + |T||V| StT but they may use Melhorn for apx StT (at the increase in runtime)
% (to the best of our knowledge, efficiency of approximation algorithms for demand-robust optimization was not addressed previously) -- not entirely true, stacs06 states the runtime for mincut.
\subsection*{Our contributions}In Section~\ref{sec:mincut} we present a thresholded 2-approximation for the robust mincut problem. This improves the (also thresholded) $(1+\sqrt2)$-approximation from \cite{stacs06}. The improved approximation guarantee is due to a refined analysis using, similarly to \cite{stacs06}, laminarity of mincuts (the Gomory-Hu mincuts tree).

In Section~\ref{sec:sp} we give a $(\gamma+2)$-approximation algorithm for the robust shortest path problem, where $\gamma$ is the Steiner tree approximation ratio. This improves the $2(\gamma+2)$-approximation from \cite{stacs06} (the techniques in \cite{arxivicalp10} potentially imply a 4.25-approximation). The algorithm and its analysis are very simple. In particular, unlike \cite{stacs06} we do not restrict ourselves to minimal feasible solutions and do not use the thresholding. Avoiding the guessing of the second-stage cost of the optimum makes our algorithm more efficient (by at least a linear factor) than that of~\cite{stacs06}.
%We showed that a $\gamma$-approximation for the Steiner tree implies an $(\gamma+2)$-approximation for the robust shortest path. Clearly, the demand-robust shortest path problem is no easier than the Steiner tree: for $\l=\infty$ the optimal solution is a Steiner tree on $T$. Thus, a $\gamma$-approximation for the demand-robust path implies a $\gamma$-approximation to the Steiner tree. It is interesting whether this could be strengthened, i.e., whether the robust path problem is strictly harder than the Steiner tree: Is it true that a $\gamma$-approximation for the demand-robust path implies an $f(\gamma)$-approximation for the Steiner tree, for some function $f(\gamma)<\gamma$?
%
%Incidentally, our approximation ratio is exactly half of that in \cite{stacs06}, who lose a factor of 2 in their analysis due to employing the structural result that restricting the first-stage solution to be a tree increases the approximation ratio by the factor of at most 2 (see \cite[Lemma~3.1]{stacs06} and \cite{focs05}). Our first-stage solution is also a tree; however, we avoid using the result by doing a more straightforward (and simpler) analysis.

\section{Demand-robust mincut}\label{sec:mincut}
%\paragraph{Problem formulation}
In the demand-robust mincut problem the input is a (positively) weighted undirected graph $G=(V,E,C)$ with $C$ representing the \e{capacities} of edges, the \e{root} vertex $r\in V$, and a set $T\subseteq V\setminus r$ of \e{terminals}. For a terminal $t\in T$ let $\m(t)$ denote the minimum $r\-t$ cut (if the mincut is not unique, take $\m(t)$ to be the cut that cuts out from $r$ a maximal set of vertices); similarly, for a set $S\subseteq T$ of terminals, $\m(S)$ is the minimum $r\-S$ cut. We use $C(t),C(S)$ to denote the capacities $C(\m(t)),C(\m(S))$ of the mincuts $\m(t),\m(S)$.
For a subset $E'\subseteq E$ of edges let $\m_{E'}(t)$ be the minimum $r\-t$ cut in $G$ with weights of edges in $E'$ set to 0; let $C_{E'}(t)$ denote the capacity of the mincut $\m_{E'}(t)$.

A feasible solution to the robust mincut problem is an arbitrary set $E_I\subseteq E$ of edges. The cost of the solution is
\[C(E_I)+\l\cdot\max_{t\in T}C_{E_I}(t)\]
where \l is the inflation factor.

The edges $E_I$ of the solution are called the \e{first-stage} edges and the cost $C(E_I)$ --- \e{first-stage} cost; the edges $\m_{E_I}(t)$ are the \e{second-stage} edges for terminal $t$ and the cost $\max_{t\in T}C_{E_I}(t)$ is the \e{second-stage} cost. The objective is to find $E_I$ minimizing the two-stage cost (with the second-stage cost inflated by \l).
\paragraph{Notation}For a set $P\subseteq V$ of vertices let $\partial P$ denote the \e{boundary} of $P$ --- the set of edges that have exactly one endpoint in $P$. We use $E_I^*$ to denote the optimal solution.
\subsection{Mincuts laminarity}Let $G^*$ be $G$ with the capacities of edges in $E_I^*$ set to 0. For a terminal $t\in T$, let $Q^*_t\subseteq V\setminus r$ denote the $t$-side of the cut $\m_{E_I^*}(t)$ --- the vertices reachable from $t$ after the edges $E_I^*$ and $\m_{E_I^*}(t)$ are removed (the asterisk emphasizes that $Q^*$ is the $t$-side of the mincut in $G^*$, not in the original $G$). It is known (e.g., can be seen from the Gomory-Hu tree \cite[Section~8.6]{coBook}) that these $t$-sides do not properly intersect --- $\forall u,v\in T$ either $Q^*_u\cap Q^*_v=\emptyset$ or $Q^*_u\subseteq Q^*_v$ or $Q^*_v\subseteq Q^*_u$. In other words, for any subset $S\subseteq T$ of terminals the $t$-sides of the terminals in $S$ form a laminar family $\cal{F}_S^*=\{Q^*_t:t\in S\} $ of sets.

Let $F_S^*\subseteq\cal{F}_S^*$ be the basic (inclusion-maximal) sets in the family $\cal{F}_S^*$; assume that all sets in $F_S^*$ are unique (note that in principle we could have $Q^*_u=Q^*_v=Q^*$ for $u,v\in S,u\neq v$, with $Q^*$ not being a proper subset of any other set in $\cal{F}_S^*$ --- in this case only one of $Q^*_u,Q^*_v$ is included in $F_S^*$). Call the terminals $B_S^*=\{b\in S:Q^*_b\in F_S^*\}$ the \e{basic} terminals of~$S$.
\subsection{Thresholded $\alpha$-approximation}The thresholded covering paradigm applied to the robust mincut problem works as follows: Imagine that we know the cost $C_{II}^*=\max_{t\in T}C_{E_I^*}(t)$ that the optimum pays at the second stage. To obtain an $\alpha$-approximate solution, cut out the set $U=\{t\in T:C(t)>\alpha C_{II}^*\}$ of "expensive" terminals in the first stage. That is, the output of the algorithm is $\m(U)$.% and its cost is \begin{multline*}c(\m(U))+\l\max_{t\in T}c(\m_{\m(U)}(t))=\\=c(\m(U))+\l\max_{t\in T\setminus U}c(\m_{\m(U)}(t))\le\\\le c(\m(U))+\alpha\l C\end{multline*}

Assuming the terminals in $T=(t_1,\dots,t_{|T|})$ are ordered in non-increasing order of mincut ($C(t_i)\ge C(t_{i+1})$), for any $C_{II}^*$ we have $U=\{t_1,t_2,\dots,t_j\}$ for some $j=j(C_{II}^*)\in\{0,1,2,\dots,|T|\}$. Hence there are only $|T|+1$ different possible sets $U$ for all possible $C_{II}^*\ge0$ --- so try all the possibilities and choose the best; this way the non-determinism in $C$ is reduced just to the non-determinism w.r.t.~$j$.

For the approximation ratio analysis assume that the algorithm is run with the "right" guess of $U$ corresponding to the right choice of~$C_{II}^*$.
By definition of $U$, the thresholded algorithm pays at most $\alpha C_{II}^*$ in the second stage. The tricky part is to bound the cost, $C(U)$, of the first stage.% as a function of $\alpha$. Below we outline the
\paragraph{The analysis of \cite{stacs06}} Golovin, Goyal and Ravi \cite{stacs06} %observe, and let $\partial Q^*_b$ denote the boundary of $Q^*_b$; that $(\alpha-1)\,C(\partial Q^*_b\setminus E_I^*)\le C(\partial Q^*_b\cap E_I^*)$ for any basic terminal $b\in B^*_U$ of $U$, and plug the inequality into
use the following estimate of the first-stage cost of the thresholded $\alpha$-approximation:
\begin{align}\label{stacsBound}C(U)\quad\le\quad C(\bigcup_{b\in B^*_U} \partial Q^*_b)\quad\le \\ %\sum_{b\in B^*_U}C(\partial Q^*_b)\quad\le \\
\notag\le \quad C(E_I^*)+\frac1{\alpha-1}\sum_{b\in B^*_U}C(\partial Q^*_b\cap E_I^*)\end{align}
To bound the last sum, \cite{stacs06} cleverly use the fact that due to pairwise-disjointness of the basic sets $F^*_U$, every edge $e\in E_I^*$ appears at most twice in the sum; thus $C(U)\le(1+\frac2{\alpha-1})C(E_I^*)$, and the overall approximation ratio of the thresholded $\alpha$-approximation algorithm is $\max(1+\frac2{\alpha-1},\alpha)$, minimized by $\alpha=1+\sqrt2$ --- the final approximation ratio of~\cite{stacs06}.
\subsection*{Using $\alpha=2$ is enough}Our algorithm is just the thresholded 2-approximation, i.e., the output of our algorithm is the minimum $r\-U$ cut where $U=\{t\in T:C(t)>2C_{II}^*\}$; as usual, for the analysis we assume that $C_{II}^*$ (or, equivalently, $U$) was guessed correctly. The second-stage cost of our solution is at most $2C_{II}^*$. In what follows we prove the bound $C(U)\le2C(E_I^*)$ on the fist-stage cost of our algorithm.

To show $C(U)\le2C(E_I^*)$, instead of a correct but too generous bound (\ref{stacsBound}) of \cite{stacs06} on $C(U)$, we use a tighter estimate
\begin{equation*}\label{eq:us}C(U) \quad \le \quad C\left(\partial\bigcup_{b\in B_U^*} Q_b^*\right)\end{equation*}
The correctness of the estimate follows from the same argument as in \cite{stacs06}: every terminal of $U$ belongs to at least one of the sets $Q_b^*$, and none of the sets $Q_b^*$ contains $r$; thus the boundary of the union is an $r\-U$ cut.

To prove
\[C\left(\partial\bigcup_{b\in B_U^*}Q_b^*\right)\quad\le\quad2C(E_I^*)\]
we argue that
\begin{equation}\label{bigeq}C\left(\partial\bigcup_{b\in B_U^*}Q_b^*\right)\quad\le\quad2C\left(\Delta^*\left(\bigcup_{b\in B_U^*}Q_b^*\right)\right)\end{equation}
where $\Delta^*(P)=((P\times P)\cap E_I^*) \cup (\partial P\cap E_I^*)$ denotes the edges from $E_I^*$ that have \e{at least one} endpoint in a set $P\subseteq V$ of vertices; clearly, the right-hand-side of (\ref{bigeq}) is at most $2C(E_I^*)$.

Number the terminals in $B_U^*$ arbitrarily: $B_U^*=(b_1,b_2,\dots)$. For $k=0,1,\dots,|B_U^*|$ define $B^*_k=Q_{b_1}^*\cup Q_{b_2}^*\cup\dots\cup Q_{b_k}^*$. The inequality (\ref{bigeq}) follows from the next lemma:
\begin{lemma}$\forall k=0,1,\dots,|B_U^*|,\quad C\left(\partial B^*_k\right)\le2C\left(\Delta^*\left(B^*_k\right)\right)$\end{lemma}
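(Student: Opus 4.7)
The plan is to proceed by induction on $k$. The base case $k=0$ is trivial, since $B_0^*=\emptyset$ makes both sides zero.

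For the inductive step, write $B_k^*=A\cup B$ with $A=B_{k-1}^*$ and $B=Q_{b_k}^*$. Laminarity of $\{Q_b^*\}_{b\in B_U^*}$ (these are the inclusion-maximal elements of a laminar family, hence pairwise disjoint) guarantees $A\cap B=\emptyset$. A direct edge count then yields two identities,
\begin{align*}
C(\partial(A\cup B)) &= C(\partial A)+C(\partial B)-2C(E(A,B)), \\
C(\Delta^*(A\cup B)) &= C(\Delta^*(A))+C(\Delta^*(B))-C(E_I^*\cap E(A,B)),
\end{align*}
where $E(A,B)$ denotes the edges with one endpoint in $A$ and the other in $B$: in the first identity, $A$-to-$B$ edges lie in both $\partial A$ and $\partial B$ yet not in $\partial(A\cup B)$; in the second, they are the only edges counted in both $\Delta^*(A)$ and $\Delta^*(B)$. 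Plugging in the inductive hypothesis $C(\partial A)\le 2C(\Delta^*(A))$, the target inequality reduces to the single auxiliary bound $C(\partial B)\le 2C(\Delta^*(B))$, because the leftover cross terms combine into $-2C(E(A,B)\setminus E_I^*)\le 0$.

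The heart of the argument, and the only place the threshold defining $U$ enters, is establishing $C(\partial Q_{b_k}^*)\le 2C(\Delta^*(Q_{b_k}^*))$. I plan to split $C(\partial Q_{b_k}^*)=x+y$ with $x=C(\partial Q_{b_k}^*\cap E_I^*)$ and $y=C(\partial Q_{b_k}^*\setminus E_I^*)$. Since $\partial Q_{b_k}^*$ is the minimum $r$-$b_k$ cut in $G^*$ (where edges of $E_I^*$ carry capacity zero), the non-$E_I^*$ contribution equals $y=C_{E_I^*}(b_k)\le C_{II}^*$. At the same time, $\partial Q_{b_k}^*$ remains an $r$-$b_k$ cut in the original $G$, so $x+y\ge C(b_k)$. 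Because $b_k\in B_U^*\subseteq U$, the threshold gives $C(b_k)>2C_{II}^*\ge 2y$, hence $x>y$, and therefore $C(\partial B)=x+y<2x\le 2C(\Delta^*(B))$, using $\partial B\cap E_I^*\subseteq \Delta^*(B)$ for the last step.

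The main obstacle is precisely this auxiliary inequality: one must observe that the unavoidable $G^*$-capacity contribution $y$ to the cut around $Q_{b_k}^*$ is strictly dominated by the $E_I^*$-capacity contribution $x$, and this domination is driven entirely by the threshold that put $b_k$ into $U$. Everything else is routine bookkeeping with edge multiplicities and the disjointness provided by the laminarity of the basic sets.
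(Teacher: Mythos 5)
Your proof is correct and follows essentially the same route as the paper: induction on $k$, with the threshold $C(b_k)>2C_{II}^*$ combined with $C_{E_I^*}(b_k)\le C_{II}^*$ yielding exactly the paper's key inequality that the $E_I^*$-portion of $\partial Q_{b_k}^*$ dominates the non-$E_I^*$ portion. The only (valid) difference is bookkeeping: you isolate a standalone per-set bound $C(\partial Q_{b_k}^*)\le 2C(\Delta^*(Q_{b_k}^*))$ and combine it with the inductive hypothesis via exact inclusion--exclusion over the disjoint basic sets, whereas the paper folds the same estimate into the inductive step by crediting only the $Y^*$ edges to the growth of $\Delta^*$.
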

\begin{proof}By induction on $k$. The base is trivial: $0=C(\partial\emptyset)\le2C(\Delta^*(\emptyset))=0$.

Let $X^*=\partial(B^*_{k-1},Q^*_{b_k})\cap E_I^*,X=\partial(B^*_{k-1},Q^*_{b_k})\setminus E_I^*,Y^*=(\partial Q^*_{b_k}\setminus (X^*\cup X))\cap E_I^*,Y=(\partial Q^*_{b_k}\setminus (X^*\cup X))\setminus E_I^*$ (Fig.~\ref{induction}). %, and let $x^*=C(X*),x=C(X),y^*=C(Y\cap E_I^*),y=C(Y\setminus E_I^*)$ . %By (\ref{high}), $x^*+y^*\ge x+y\ge y-x$, or $y^*\ge y-(x^*+x)$, and
\begin{figure}\centering\includegraphics{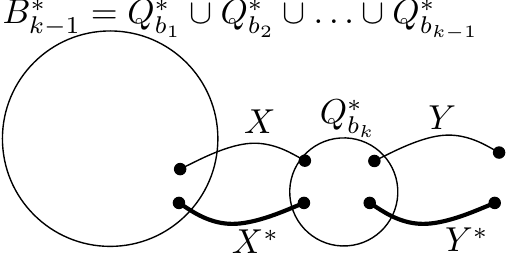}\caption{$E_I^*$ is bold. $B^*_k=B^*_{k-1}\cup Q^*_{b_k}$.}\label{induction}\end{figure}

Because $b_k$ belongs to the set of high-cost terminals $U$, the optimal solution $E_I^*$ must "help" $b_k$ by at least half (using the terminology from \cite{icalp10}, $b_k$ is "low"):
\begin{equation}\label{high}C(X^*)+C(Y^*)\ge C(X)+C(Y)\end{equation}
Indeed, since $\partial Q^*_{b_k}$ is an $r\-b_k$ cut, $C(b_k)\le C(\partial Q^*_{b_k})=C(X^*)+C(Y^*)+C(X)+C(Y)$, and since $b_k\in U$, $C(b_k)\ge2C_{II}^*\ge2C_{E_I^*}(b_k)=2(C(X)+C(Y))$, from where (\ref{high}) follows.

Using $C(X)\ge0$ we rewrite (\ref{high}) as
\begin{align}\label{high_alt}\notag C(Y^*)\ge\frac{C(Y^*)}2+\frac{C(Y)-C(X^*)+C(X)}2\ge\\
\ge\frac{C(Y^*)+C(Y)-C(X^*)-C(X)}2\end{align}

We have (see Fig~\ref{induction}):
\begin{equation}\label{eq:delta}C(\Delta^*(B_k))\ge C(\Delta^*(B^*_{k-1}))+C(Y^*)\end{equation}
\begin{equation}\label{eq:cut}C(\partial B^*_k)=C(\partial B^*_{k-1})+C(Y^*)+C(Y)-C(X^*)-C(X)\end{equation}

By the inductive hypothesis,
\begin{equation}\label{eq:hypothesis}C(\Delta^*(B_{k-1}))\ge\frac{C(\partial B^*_{k-1})}2\end{equation}

Putting (\ref{high_alt}), (\ref{eq:delta}), (\ref{eq:cut}), (\ref{eq:hypothesis}) together we obtain
\begin{multline*}C(\Delta^*(B_k))\ge \frac{C(\partial B^*_{k-1})}2 + \frac{C(Y^*)+C(Y)-C(X^*)-C(X)}2 = \frac{C(\partial B^*_k)}{2}
\end{multline*}
\end{proof}
Overall, we have that the first-stage cost of our solution is at most $2C(E_I^*)$, and the second-stage cost is at most $2C_{II}^*$:
\begin{theorem}There is a polynomial-time algorithm which gives a 2-approximation for the robust mincut problem.\end{theorem}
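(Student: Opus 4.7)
The plan is to package the algorithm and analysis already developed in the section into a single statement, taking the lemma as a black box. The algorithm itself is the thresholded 2-approximation: enumerate the $|T|+1$ candidate sets $U=\{t_1,\dots,t_j\}$ obtained by sorting terminals by $C(t_i)$, for each candidate compute the mincut $\m(U)$, and return the solution $E_I = \m(U)$ that minimizes $C(E_I)+\lambda\cdot\max_{t\in T}C_{E_I}(t)$. Since mincuts and the ordering are polynomial-time computable and there are only a linear number of candidates, the overall running time is polynomial.

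For correctness I would fix the (existentially guaranteed) ``right'' guess, namely the $j$ for which $U=\{t\in T:C(t)>2C_{II}^*\}$ with $C_{II}^*=\max_{t\in T}C_{E_I^*}(t)$; the algorithm's output is no worse than the solution obtained from this guess. The second-stage bound is immediate: for every $t\notin U$ we have $C_{E_I}(t)\le C(t)\le 2C_{II}^*$, and for every $t\in U$ the chosen first-stage cut $\m(U)$ already separates $t$ from $r$, so $C_{E_I}(t)=0$; hence $\lambda\cdot\max_t C_{E_I}(t)\le 2\lambda C_{II}^*\le 2\lambda C_{II}^*$, which is within a factor of $2$ of the optimum second-stage contribution $\lambda C_{II}^*$.

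For the first-stage bound I would invoke the Lemma with $k=|B_U^*|$. This yields $C(\partial B^*_{|B_U^*|})\le 2\,C(\Delta^*(B^*_{|B_U^*|}))\le 2\,C(E_I^*)$, the last inequality because $\Delta^*(P)\subseteq E_I^*$ for every $P$. Combined with the estimate
\[
C(U)\;\le\;C\!\left(\partial\bigcup_{b\in B_U^*}Q_b^*\right)\;=\;C(\partial B^*_{|B_U^*|})
\]
justified earlier in the section (using that each terminal in $U$ is contained in some $Q_b^*$ while $r$ is in none), we conclude $C(\m(U))\le C(U)\le 2C(E_I^*)$. Adding the two stages gives total cost at most $2C(E_I^*)+2\lambda C_{II}^*=2\cdot\mathrm{OPT}$.

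The only real subtlety — the inductive bound relating $\partial B^*_k$ to $\Delta^*(B^*_k)$ — is already absorbed into the Lemma; assuming that, the proof of the theorem is a short assembly. The main obstacle, if one were approaching this cold, would be guessing the correct first-stage cut upper bound $C(\partial\bigcup_b Q_b^*)$ (rather than the looser $\sum_b C(\partial Q_b^*)$ used in \cite{stacs06}) and carrying the induction through in the presence of the ``crossing'' edges $X,X^*$; here it is already done, so nothing further is required beyond citing it.
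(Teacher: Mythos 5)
Your proposal is correct and follows essentially the same route as the paper: the thresholded $\alpha=2$ algorithm with enumeration of the $|T|+1$ candidate sets $U$, the immediate second-stage bound $2C_{II}^*$, and the first-stage bound $C(U)\le C\bigl(\partial\bigcup_{b\in B_U^*}Q_b^*\bigr)\le 2C(E_I^*)$ obtained by invoking the lemma at $k=|B_U^*|$. Nothing is missing; the assembly you describe is exactly how the paper concludes the theorem.
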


\section{Demand-robust shortest path}\label{sec:sp}
The input to the demand-robust shortest path is the same as to the demand-robust mincut problem: graph $G=(V,E,w)$ with $w$ representing the \e{lengths} of edges, root vertex $r\in V$, and a set $T\subseteq V\setminus r$ of terminals. A solution is a set $E_I\subseteq E$ of edges. The cost of the solution is
\[w(E_I)+\l\cdot\max_{t\in T}w(\sp_{E_I}(t))\]
where $\sp_{E_I}(t)$ is the shortest $r\-t$ path in $G$ with weights of edges in $E_I$ set to 0. The objective is to find $E_I$ minimizing the two-stage cost.
\subsection{Algorithm}
Our solution is a Steiner tree on a subset $S\subseteq T$ of \l terminals. The set $S$ is built incrementally, starting from $r$, and repeatedly adding a farthest (with ties broken arbitrarily) terminal, until gathering \l of them:
\begin{codebox}
%\Procname{$\proc{SimpleRobSP}(G=(V,E,w),r,T,\l)$} Can we refer to the alg's name later?
\li $S\leftarrow\{r\}$
\li \While $|S| \le \min\left(\l,|T|\right)$ \label{size}
\li \Do $S\gets S\cup\mathrm{arg}\max\limits_{u\in T\setminus S}\sp(u,S)$ \qquad\Comment Add farthest terminal \End \label{sp}% Strictly speaking, argmax can be a whole set, but lets not worry.
\li \Return $E_I\gets$ approximate Steiner tree on $S$ \label{st}
%\li Second Stage Solution: Shortest path from the demand to $S$
\end{codebox}
\subsection{Approximation ratio}We now analyze the approximation guarantee of the algorithm. Let $f$ be the distance from $S$ to the terminal added in the last iteration of the {\bf while} loop (line~\ref{sp}). Because $|S|$ was growing from iteration to iteration, we have that at \e{any} iteration the distance from $S$ to the farthest terminal was at least $f$. Thus, the distance between any two vertices in $S$ is at least $f$. Hence, $\st{S}\ge|S|f/2$ where \st{S} is the weight of the minimum Steiner tree on $S$ (to see this, go twice around the tree --- you traveled 2\st{S}, spending at least $f$ traveling between any two vertices in~$S$). That is,
\[\label{f}f\le\frac{2\st{S}}{|S|}\]

Because we always add farthest terminal to $S$, at the completion of the algorithm the distance from any terminal to $S$ is at most $f$. Hence,
\[w(E_{II})\le f\]
where $E_{II}$ are the edges that we buy at the second stage.

Let $E^*_I$ be the optimal solution. Let $E^*_{II}(t)$ be the edges that the optimal solution buys at the second stage if the demand is $t\in T$. Let $E^*_{II}$ be the edges that the optimum buys in the worst case: $w(E^*_{II})=\max_tw(E^*_{II}(t))$. Then $E^*_I\cup_{t\in S\setminus r}E^*_{II}(t)$ is a connected graph that spans $S$. Hence
\[w(E^*_I)+\sum_{t\in S\setminus r}w(E^*_{II}(t))\ge\st{S}\]
and
\[w(E^*_I)+(|S|-1)w(E^*_{II})\ge\st{S}\]

We consider the cases $\l\ge|T|$ and $\l<|T|$ separately:

If $\l\ge|T|$, then $S=T\cup r$, and
\begin{multline*}
\opt = w(E^*_I)+\l w(E^*_{II}) \quad\ge\quad w(E^*_I) + |T|w(E^*_{II}) \quad= \\
= \quad w(E^*_I) + (|S|-1)w(E^*_{II}) \quad\ge\quad \st{S} \quad\ge\quad w(E_I)/\gamma \quad=\quad \apx/\gamma
\end{multline*}
where \opt is the optimal cost, \apx is what we pay, and $\gamma$ is the approximation factor for the Steiner tree. That is, $\apx\le\gamma\,\opt$.

If $\l<|T|$, then
\begin{multline*}
\apx \quad=\quad w(E_I) + \l w(E_{II}) \quad\le\quad \gamma\st{S}+\l f \quad\le \\
\le\quad \left(\gamma+\frac{2\l}{|S|}\right)\st{S} \quad\le\quad \left(\gamma+\frac{2\l}{|S|}\right)(w(E^*_I)+(|S|-1)w(E^*_{II})) \quad\le \\
\le\quad (\gamma+2)(w(E^*_I)+ \l w(E^*_{II})) \quad=\quad (\gamma+2)\,\opt
\end{multline*}
because $\l\le|S|\le\l+1$.
\begin{theorem}If for some class of graphs there is a $\gamma$-approximation for Steiner tree, then for that class of graphs there is a $(\gamma+2)$-approximation for robust shortest path.\end{theorem}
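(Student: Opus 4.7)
The plan is to bound the first-stage and second-stage costs separately in terms of the Steiner tree weight $\st{S}$, and then derive a single lower bound on $\st{S}$ in terms of the optimum that lets both bounds be charged simultaneously.

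First I would set $f$ to be the distance from $S$ to the terminal added at the final iteration of the while loop. Since we always pick the farthest remaining terminal and $|S|$ grows monotonically, the distance from $S$ to the farthest outside terminal is non-increasing over iterations, so every pair of vertices in the final $S$ is at distance at least $f$ from each other. A tree-doubling argument (walk twice around the Steiner tree on $S$ and spend at least $f$ between consecutive visits to distinct vertices of $S$) then gives $\st{S} \ge |S|f/2$, i.e.\ $f \le 2\st{S}/|S|$. Simultaneously, the greedy stopping condition guarantees that every terminal in $T\setminus S$ lies within distance $f$ of $S$, so the second-stage cost is bounded by $f$ regardless of the adversary's pick, and the first-stage cost is at most $\gamma\st{S}$ by the Steiner tree approximation.

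Next I would lower-bound $\st{S}$ by the optimum. The idea is that $E^*_I \cup \bigcup_{t\in S\setminus r} E^*_{II}(t)$ is a connected subgraph spanning $S$: each $t\in S\setminus r$ is reached from $r$ in the optimum's own response to scenario $t$. Its total weight is therefore at least $\st{S}$, and since each per-scenario second-stage cost $w(E^*_{II}(t))$ is at most the worst-case $w(E^*_{II})$, we obtain $w(E^*_I)+(|S|-1)w(E^*_{II})\ge\st{S}$.

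Finally I would split into the two cases dictated by why the loop terminated. If $\l \ge |T|$, then $S = T\cup\{r\}$ so that $|S|-1 = |T|\le \l$; plugging into the lower bound immediately gives $\opt \ge \st{S} \ge w(E_I)/\gamma$, i.e.\ a $\gamma$-approximation. If $\l<|T|$, then $|S|=\l+1$, so combining $w(E_I)\le\gamma\st{S}$ with $\l f \le 2\l\st{S}/|S|$ yields $\apx \le (\gamma + 2\l/|S|)\st{S}$, and then the lower bound on $\st{S}$ together with $\l\le|S|\le\l+1$ collapses $(\gamma + 2\l/|S|)\cdot(w(E^*_I)+(|S|-1)w(E^*_{II}))$ to $(\gamma+2)\,\opt$. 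The main obstacle is keeping the arithmetic tight in this second case: the factor $2\l/|S|$ and the factor $|S|-1$ in the lower bound must combine cleanly, and this works precisely because the greedy stopped exactly when $|S|$ reached $\l+1$, making $(|S|-1)/\l = 1$ and $2\l/|S| \le 2$.
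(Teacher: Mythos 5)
Your proposal is correct and follows essentially the same route as the paper: the same greedy farthest-terminal set $S$, the bound $f\le 2\st{S}/|S|$ via tree doubling, the lower bound $w(E^*_I)+(|S|-1)w(E^*_{II})\ge\st{S}$ from connecting $S$ by the optimum's first- and second-stage edges, and the same two-case analysis with $\l\le|S|\le\l+1$. No gaps worth noting (in the case $\l\ge|T|$ one just observes, as the paper implicitly does, that the second-stage cost is zero since $S\supseteq T$).
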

For general graphs, the best $\gamma=1.39$ is due to Byrka et al.~\cite{stoc10}:%Robins and Zelikovsky \cite{robinsZelikovsky}:
\begin{corollary}There is a polynomial-time algorithm which gives a 3.39-approximation for the robust shortest path problem.\end{corollary}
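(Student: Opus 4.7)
The plan is to instantiate the preceding theorem with the best currently known Steiner tree approximation factor for general graphs. Concretely, I would take the $\gamma = 1.39$ approximation for the Steiner tree problem due to Byrka et al.~\cite{stoc10}, use it as the black-box called on line~\ref{st} of the algorithm of Section~\ref{sec:sp}, and then read off the ratio $\gamma + 2 = 3.39$ directly from the theorem.

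The only remaining task is to confirm that the overall algorithm runs in polynomial time. The while loop iterates at most $\min(\l,|T|) \le |V|$ times, and each iteration identifies the farthest terminal from the current set $S$ by a single single-source shortest-path computation (for example, Dijkstra from an auxiliary super-source joined to every vertex of $S$ by zero-length edges), which is polynomial. Line~\ref{st} invokes the polynomial-time Steiner tree approximation of~\cite{stoc10}. Hence the overall running time is polynomial.

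Honestly, there is no real obstacle: the corollary is a one-line instantiation of the preceding theorem, and the only non-trivial ingredient is citing the strongest Steiner tree approximation currently available. Any future improvement to $\gamma$ would immediately translate, via the very same argument, into a correspondingly improved approximation ratio for the robust shortest path problem.
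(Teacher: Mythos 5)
Your proposal is correct and matches the paper exactly: the corollary is obtained by plugging the $\gamma=1.39$ Steiner tree approximation of Byrka et al.~\cite{stoc10} into the preceding theorem, and your polynomial-time check (Dijkstra per iteration plus one Steiner tree call) is the obvious, intended justification. Nothing further is needed.
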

\subsection{Running time}As far as the efficiency of approximating the robust shortest path is concerned, the best bound that can be given on the running time of the algorithm of \cite{stacs06} is $O(|T||V||E|)$. The (multiplicative) overhead of $O(T||V|)$ is due to guessing $|T||V|$ possible values for the second-stage cost of the optimal solution. Then for each guess the algorithm of \cite{stacs06} builds an approximate Steiner tree, which must take $\Omega(|E|)$ time.

Because our algorithm avoids the guessing, we can achieve the running time of $O(\min(\l,|T|)(|E|+|V|\log|V|))$ at the expense of increasing the approximation ratio to 4. For that, in line~\ref{st} we use the $O(|E|+|V|\log|V|)$-time 2-approximation algorithm of Mehlhorn~\cite{mehlhorn} or Floren~\cite{floren}. Then our algorithm's running time is dominated by finding the farthest terminals in line~\ref{sp}.
%stacs06 runs in SP + |T||V|*StT but they may use Melhorn for apx StT (at the increase in apx ratio)
% (to the best of our knowledge, efficiency of approximation algorithms for demand-robust optimization was not addressed previously) -- not true, stacs06 states the runtime for mincut.
\begin{corollary}There is an $O(\min(\l,|T|)(|E|+|V|\log|V|))$-time algorithm which gives a 4-approximation for the robust shortest path problem.\end{corollary}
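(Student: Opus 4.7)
The plan is to derive the corollary directly from the preceding theorem by plugging in $\gamma=2$ and to justify the running time by analyzing each step of the algorithm separately.

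For the approximation bound, I would invoke the previous theorem with the 2-approximate Steiner tree algorithm of Mehlhorn or Floren in line~\ref{st}. Since these algorithms work on general graphs with $\gamma=2$, the theorem immediately yields a $(\gamma+2)=4$-approximation. This is the easy part of the proof; no new analysis is needed beyond citing the existence of a linear-time 2-approximation with the claimed $O(|E|+|V|\log|V|)$ bound.

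For the running time, I would argue as follows. The \textbf{while} loop on line~\ref{size} iterates at most $\min(\l,|T|)$ times, since each iteration adds a fresh terminal and the loop stops once $|S|$ reaches that threshold. Inside each iteration, line~\ref{sp} requires computing the distance from the current set $S$ to every remaining terminal in $T\setminus S$, and then selecting the maximum. This can be done with a single multi-source Dijkstra computation: conceptually, contract $S$ to a single super-source and run Dijkstra on $G$ from it. With Fibonacci heaps this takes $O(|E|+|V|\log|V|)$ time, and identifying the farthest terminal is then an $O(|T|)=O(|V|)$ scan over the resulting distance labels. Hence each iteration costs $O(|E|+|V|\log|V|)$ and the loop totals $O(\min(\l,|T|)(|E|+|V|\log|V|))$.

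Finally, line~\ref{st} invokes Mehlhorn's or Floren's algorithm once on the set $S$, which runs in $O(|E|+|V|\log|V|)$ time; this is dominated by the loop. Summing everything gives the desired bound $O(\min(\l,|T|)(|E|+|V|\log|V|))$. The main obstacle I expect is simply spelling out cleanly that the ``farthest terminal'' query can be implemented by a single multi-source Dijkstra per iteration rather than naively running one Dijkstra per source vertex in $S$; once this is observed, the rest is a direct accounting of standard subroutine running times.
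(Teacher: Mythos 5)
Your proposal is correct and follows essentially the same route as the paper: take $\gamma=2$ via Mehlhorn's (or Floren's) $O(|E|+|V|\log|V|)$-time Steiner tree approximation to get the factor 4 from the theorem, and observe that the running time is dominated by the $\min(\l,|T|)$ farthest-terminal computations, each a single multi-source Dijkstra. You merely spell out the multi-source Dijkstra implementation more explicitly than the paper does.
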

\bibliographystyle{abbrv}\bibliography{robust}
\end{document}